\newtheorem{theorem}{Theorem} 
\newcommand{\algorithmicparameters}{\textbf{Parameters:}} 
\newcommand{\PARAMETER}{\item[\algorithmicparameters]}
\newcommand{\algorithmicpreparation}{\textbf{Preparation:}} 
\newcommand{\PREPARATION}{\item[\algorithmicpreparation]}
\title{Bayesian Signal Separation via Plug-and-Play\\ Diffusion-Within-Gibbs Sampling}
\name{Yi Zhang\thanks{This research was supported by the European Research Council (ERC) under the European Union’s Horizon 2020 research and innovation program (grant No. 101000967) and by the Israel Science Foundation (grant No. 536/22). This work is also supported by Manya Igel Centre for Biomedical Engineering and Signal Processing.}, Rui Guo\thanks{Corresponding author: Rui Guo (rui.guo@weizmann.ac.il).}, Yonina C. Eldar
}
\address{Faculty of Math and Computer Science, Weizmann Institute of Science, Rehovot, Israel}
\begin{document}
%
\maketitle
\begin{abstract}
We propose a posterior sampling algorithm for the problem of estimating multiple independent source signals from their noisy superposition. The proposed algorithm is a combination of Gibbs sampling method and plug-and-play (PnP) diffusion priors. Unlike most existing diffusion-model-based approaches for signal separation, our method allows source priors to be learned separately and flexibly combined without retraining. Moreover, under the assumption of perfect diffusion model training, the proposed method provably produces samples from the posterior distribution. Experiments on the task of heartbeat extraction from mixtures with synthetic motion artifacts demonstrate the superior performance of our method over existing approaches.  
\end{abstract}
\begin{keywords}
signal separation, source separation, diffusion, Gibbs sampling, plug-and-play
\end{keywords}

\section{Introduction}
\label{sec:intro}

We consider the problem of recovering $K$ independent source signals $\{\sv_k\}_{k=1}^K$ from their noisy superposition:
\begin{equation}\label{eq:observation}
	\yv = \sum_{k=1}^K \sv_k + \vv,
\end{equation}
where the observational noise $\vv \sim \Normal{\zerov}{\sigma_v^2 \Imat}$ is independent of sources. 
This formulation arises in diverse applications, such as speech separation in the ``cocktail party problem" \cite{luo2018}, clutter suppression in ultrasound imaging \cite{wildeboer2020}, and vital sign monitoring using radar \cite{han2023cough}. 

Classical model-driven approaches often rely on simplified priors on the sources, such as periodicity or narrowband structure, leading to decomposition methods like empirical mode decomposition (EMD \cite{huang1998}) or variational mode decomposition (VMD \cite{dragomiretskiy2014}). Blind separation techniques \cite{comon2010}, such as independent component analysis (ICA) and nonnegative matrix factorization (NMF), exploit statistical independence of the sources or nonnegativity assumptions, but are mainly tailored for multi-channel observations.  

With the advent of deep learning, data-driven approaches have become dominant. Earlier methods typically adopt a discriminative paradigm \cite{weninger2014, luo2018, wang2023}, training a neural network to map noisy mixtures to separated sources. While effective, such approaches usually require retraining when the number or prior distributions of sources change, and they only yield point estimates, offering no uncertainty quantification.   

Recently, generative models have emerged as powerful priors for inverse problems \cite{daras2024}. 
Diffusion models, as the state-of-the-art in this family, allow sampling from the posterior distribution $p_{\sv_1,\cdots,\sv_K | \yv}$, enabling approximation of the MMSE estimator $\CondExpt{\sv_1,\cdots,\sv_K}{\yv}$ by posterior sample average and facilitating uncertainty quantification. However, most existing diffusion-model-based approaches for signal separation either train diffusion models to directly sample from the posterior distribution \cite{scheibler2023,dong2025,karchkhadze2025}, requiring retraining when the number or prior distributions of sources change, or lack theoretical guarantees even if the training of diffusion models is perfect \cite{mariani2024}.  

In this work, we propose a posterior sampling algorithm for Bayesian signal separation, based on Gibbs sampling and plug-and-play (PnP) diffusion priors. Our main contributions are threefold:
\begin{enumerate}[label=(\alph*)]
	\item We introduce a modular posterior sampling framework that alternates Gibbs updates of individual sources using their corresponding diffusion models, allowing source priors to be learned separately and flexibly combined without retraining.
	 
	\item We establish a consistency guarantee for the proposed algorithm, showing that it produces exact posterior samples under perfect diffusion model training.  
	
	\item We test the proposed algorithm on the task of extracting heartbeat signals from mixtures with synthetic motion artifacts, demonstrating its superior performance over existing approaches.
\end{enumerate}
\textbf{Notation:} For $i \le j$, let $\sv_{i:j}\triangleq (\sv_i, \sv_{i+1}, \dots, \sv_j)$ in (\ref{eq:observation}). For a random variable $\xv$, we denote a realization of it by $\hat{\xv}$.

\section{Background on Diffusion Models}
\label{sec:DM}

Stochastic differential equation (SDE)-based diffusion models \cite{song2021} have been widely used for modeling complex probability distributions. Let $p_{\text{data}}$ denote the data distribution. By gradually injecting noise through a diffusion process, $p_{\text{data}}$ can be transformed into a Gaussian noise distribution. A diffusion model then performs sampling by simulating the corresponding reverse-time diffusion process, which converts Gaussian noise samples back into samples from $p_{\text{data}}$.

\subsection{Forward and Reverse-Time Diffusion Processes}
\label{sec:forward_reverse_SDEs}

The forward diffusion process is a Markov process $\paren{\xv_t}_{t=0}^T$ with initial state $\xv_0 \sim p_{\text{data}}$ and transition kernel given by the following SDE\footnote{In the general case \cite{song2021}, the forward process can be written as $d\xv_t= f(\xv,t)dt + g(t)d\wv_t$. Here we set $f(\xv,t)\equiv \zerov$ for simplicity, which already covers many state-of-the-art diffusion models \cite{karras2022}.} \cite{song2021,karras2022}:
\begin{equation}\label{eq:forward_SDE}
	d\xv_t = g(t) d\wv_t,
\end{equation}
where $\wv_t$ is a Wiener process and $g(t)>0$ controls the noise injection rate at time $t$. Integrating (\ref{eq:forward_SDE}) yields
\begin{equation}\label{eq:marginal_pdf}
	\xv_t = \xv_0 + \int_{0}^t g(s) d\wv_s \sim p_{ \xv_0 + \sigma(t) \nv} =: p_{\sigma(t)},
\end{equation}
where $\nv\sim\Normal{\zerov}{\Imat}$ is independent of $\xv_0$, and $\sigma(t)\triangleq \sqrt{ \int_{0}^t g^2(s)ds }$ denotes the standard deviation of the injected noise. Notice that the marginal distribution of $\xv_t$ only depends on $\sigma(t)$. When $\sigma(T)$ is much larger than the standard deviation of $p_{\text{data}}$, we may approximate $p_{\xv_T}$ by a Gaussian distribution $\Normal{\zerov}{\sigma^2(T)\Imat}$. Thus as $\xv_t$ evolves from $t=0$ to $T$, the marginal distribution of $\xv_t$ transforms from $p_{\xv_0}=p_{\text{data}}$ to a Gaussian distribution $\Normal{\zerov}{\sigma^2(T)\Imat}$.

The forward process (\ref{eq:forward_SDE}) admits a reverse-time counterpart \cite{anderson1982}. The reverse process is a Markov process $\paren{\bar{\xv}_t}_{t=0}^T$ independent of $\paren{\xv_t}_{t=0}^T$ with initial state being $\bar{\xv}_T\deq\xv_T\sim p_{\sigma(T)}$ and transition kernel given by the reverse-time SDE
\begin{equation}\label{eq:reverse_SDE}
	d\bar{\xv}_t = { - g^2(t) \nabla_{\bar{\xv}_t} \log p_{\sigma(t)}(\bar{\xv}_t) }dt + g(t) d\bar{\wv}_t, 
\end{equation}
where $\bar{\wv}_t$ is a Wiener process independent of $\wv_t$. For any positive integer $N$ and any finite set of times $0\leq t_1< t_2 < \cdots < t_N \leq T$, it is known that \cite{anderson1982}
\begin{equation}\label{eq:Anderson}
	\paren{\bar{\xv}_{t_1},\bar{\xv}_{t_2},\cdots,\bar{\xv}_{t_N}}\deq \paren{{\xv}_{t_1},{\xv}_{t_2},\cdots,{\xv}_{t_N}},
\end{equation}
i.e., if one observes $\bar{\xv}_t$ at a sequence of time points, the joint statistics are indistinguishable from those of $\xv_t$. Thus running $\bar{\xv}_t$ backward in time faithfully reproduces the law of $\xv_t$, which explains why it is referred to as the time reversal of $\xv_t$. 

In particular, as $\bar{\xv}_t$ evolves from $t=T$ to $0$, according to (\ref{eq:marginal_pdf}) and (\ref{eq:Anderson}), the marginal distribution of $\bar{\xv}_t$ transforms from $p_{\bar{\xv}_T}=p_{\xv_T} \approx \Normal{\zerov}{\sigma^2(T)\Imat}$ to $p_{\bar{\xv}_0}=p_{\xv_0}=p_{\text{data}}$. Notice that sampling from $p_{\bar{\xv}_T} \approx \Normal{\zerov}{\sigma^2(T)\Imat}$ is straightforward, simulating \eqref{eq:reverse_SDE} from $t=T$ to $0$ yields samples from $p_{\text{data}}$.

\subsection{Implementation of Diffusion Models}
\label{sec:implementation_dm}

In (\ref{eq:reverse_SDE}), the function $\nabla \log p_{\sigma(t)}(\cdot)$ (termed \textit{score function}) has no closed-form expression and needs to be approximated by a neural network. By Tweedie's formula \cite{efron2011}, for any $\sigma>0$,
\begin{equation}\label{eq:Tweedie}
	(\forall \zv \in \R^d) \;\nabla_{\zv} \log p_\sigma(\zv) = \frac{ \CondExpt{ \xv_0 }{ \xv_0 + \sigma\nv =\zv } - \zv } { \sigma^2 },
\end{equation}
where $\nv\sim\Normal{\zerov}{\Imat}$. Notice that $\CondExpt{ \xv_0 }{ \xv_0 + \sigma\nv =\zv }$ is the MMSE estimator of $\xv_0$ given the noisy observation $\xv_0+\sigma\nv$, thus it can be approximated by a denoising neural network $\CondExpt{ \xv_0 }{ \xv_0 + \sigma\nv =\zv }\approx D_\theta(\zv;\sigma)$, leading to
\begin{equation}\label{eq:score_approx}
	\nabla_{\zv} \log p_\sigma (\zv) \approx \frac{ D_{\theta}(\zv; \sigma) - \zv } { \sigma^2 }.
\end{equation}
The reverse-time diffusion process (\ref{eq:reverse_SDE}) can then be simulated by numerical integration. A common choice is the Euler–Maruyama method, which discretizes $[0,T]$ into steps $T=t_0>t_1>\cdots>t_{M}=0$. Starting from $\hat{\xv}_{t_0}\sim\Normal{\zerov}{\sigma^2(T)\Imat}$, we can simulate (\ref{eq:reverse_SDE}) by updating
\begin{align*}
	\hat{\xv}_{t_{i+1}} \gets \hat{\xv}_{t_i} + g^2(t_i) \frac{ D_{\theta}(\hat{\xv}_{t_i}; \sigma(t_i)) - \hat{\xv}_{t_i} } { \sigma^2(t_i) } h_i + g(t_i)\sqrt{h_i}\,\epsv_i, 
\end{align*}
where $h_i\triangleq t_i-t_{i+1}$, $\epsv_i\sim\Normal{\zerov}{\Imat}$, the score function in (\ref{eq:reverse_SDE}) is replaced by (\ref{eq:score_approx}). Then $\hat{\xv}_{t_M}$ serves as an approximate sample from $p_{\text{data}}$. In practice, the denoising network and numerical solver together constitute a diffusion model.

\section{Method and Analysis}

\begin{algorithm}[t]
	\caption{Gibbs sampling method}
	\label{alg:Gibbs}
	\begin{algorithmic}
		\REQUIRE initial value $\paren{\sv^{(0)}_2, \sv^{(0)}_3, \cdots, \sv^{(0)}_K}$, observation $\hat{\yv}$
		\PARAMETER number of iterations $N$
		\ENSURE posterior sample $\paren{\sv^{(N)}_1, \sv^{(N)}_2, \cdots, \sv^{(N)}_K}$
		\FOR{$i=1$ to $N$}
		\FOR{$k=1$ to $K$}
		\STATE draw $\sv^{(i)}_k  \sim {p}_{\sv_k | \yv , \sv_{1:k-1}, \sv_{k+1:K} } \cparen{\;\cdot\;} { \hat{\yv}, \sv_{1:k-1}^{(i)}, \sv_{k+1:K}^{(i-1)} } $
		\ENDFOR
		\ENDFOR
	\end{algorithmic}
\end{algorithm}

Given observation $\yv$ in (\ref{eq:observation}), we aim to generate samples from the posterior distribution $p_{\sv_{1:K} | \yv}$, which can be further used to approximate the MMSE estimator or other statistics of interest. A standard approach is Gibbs sampling (Algorithm \ref{alg:Gibbs}), which iteratively updates one component of $\sv_{1:K}$ at a time by sampling from its conditional distribution given the others. However, for general prior distributions $p_{\sv_k}$ ($1\leq k\leq K$), the required conditional samplers are intractable, posing a key challenge for practical implementation. 

In this section, we show that with a diffusion model for sampling each prior $p_{\sv_k}$, the conditional sampling in Algorithm \ref{alg:Gibbs} can be realized by partially simulating the reverse diffusion process (\ref{eq:reverse_SDE}). This yields our Diffusion-within-Gibbs (DiG) algorithm for Bayesian signal separation. Next we detail the derivation of the DiG algorithm, and prove that, under perfect training, DiG produces exact posterior samples.

\subsection{An Alternative Formulation of the Conditional PDF}

We first provide an alternative formulation for the conditional pdf $p_{\sv_k \mid \yv, \sv_{1:k-1}, \sv_{k+1:K}}$ appearing in Algorithm \ref{alg:Gibbs}, which will be useful in the following derivation. By (\ref{eq:observation}) and Bayes' theorem, we have
\begin{align}
	& \quad p_{\sv_k\vert \yv, \sv_{1:k-1}, \sv_{k+1:K} }\cparen{ \;\cdot\; } { \hat{\yv}, \sv_{1:k-1}^{(i)}, \sv_{k+1:K}^{(i-1)} } \nonumber\\
	\propto & \quad p_{\sv_k\vert \sv_{1:k-1}, \sv_{k+1:K} }\cparen{ \;\cdot\; } { \sv_{1:k-1}^{(i)}, \sv_{k+1:K}^{(i-1)} }   \nonumber\\
	& \quad \times p_{\yv | \sv_k, \sv_{1:k-1}, \sv_{k+1:K} } \cparen{\hat{\yv}} { \;\cdot\;,  \sv_{1:k-1}^{(i)}, \sv_{k+1:K}^{(i-1)} } \nonumber \\
	\propto & \quad p_{\sv_{k}} ( \cdot ) \; e^{ -\inv{ 2 \sigma_v^2 } \norm{ \rv^{(i)}_k - \;\cdot\;  }^2_2 },  \label{eq:cond_pdf_exp1}
\end{align}
where we define $\rv^{(i)}_k \triangleq \hat{\yv}-\sum_{j=1}^{k-1} \sv^{(i)}_{j} - \sum_{j=k+1}^K \sv^{(i-1)}_{j}$ for brevity; (\ref{eq:cond_pdf_exp1}) follows from independence of sources and (\ref{eq:observation}).

On the other hand, let $\nv \sim \Normal{\zerov}{\Imat}$ be independent of $\sv_k$, and consider the following auxiliary pdf:
\begin{align}
	p_{\sv_k | \sv_k + \sigma_v\nv } \cparen{ \;\cdot\; } { {\rv}^{(i)}_k } \propto & \; p_{\sv_{k}} \paren{ \cdot } \; p_{\sv_k + \sigma_v \nv | \sv_k} \cparen{ {\rv}^{(i)}_k } { \;\cdot\; }  \nonumber \\
	\propto & \;  p_{\sv_{k}} \paren{ \cdot } \; e^{ -\inv{ 2 \sigma_v^2 } \norm{ {\rv}^{(i)}_k - \;\cdot\; }^2_2  } .  \label{eq:cond_pdf_exp2}
\end{align}
Then by comparing (\ref{eq:cond_pdf_exp1}) and (\ref{eq:cond_pdf_exp2}), we can derive
\begin{align}\label{eq:cond_pdf_eq_var}
	\begin{split}
		& \; p_{\sv_k\vert \yv, \sv_{1:k-1}, \sv_{k+1:K} }\cparen{ \;\cdot\; } { \hat{\yv}, \sv_{1:k-1}^{(i)}, \sv_{k+1:K}^{(i-1)} } \\
		= & \; p_{\sv_k | \sv_k + \sigma_v\nv } \cparen{ \;\cdot\; } { \hat{\yv}-\sum\nolimits_{j=1}^{k-1} \sv^{(i)}_{j} - \sum\nolimits_{j=k+1}^K \sv^{(i-1)}_{j} }. 
	\end{split}
\end{align}
Hence to implement Algorithm \ref{alg:Gibbs}, it suffices to sample from $p_{\sv_k \mid \sv_k + \sigma_v \nv}$. Next we show this conditional sampling is equivalent to simulating a partial reverse-time diffusion process.

\subsection{Conditional Sampling Through Diffusion}

For each $1 \leq k \leq K$, suppose that we have access to a dataset of $\sv_k$. Then we can generate new samples from $p_{\sv_k}$ by simulating a reverse-time diffusion process $\paren{\bar{\xv}_{k,t}}_{t=0}^T$, as introduced in Sec.~\ref{sec:DM}. Its initial state is $\bar{\xv}_{k,T} \sim p^{(k)}_{\sigma(T)} \triangleq p_{\sv_k + \sigma(T)\nv}$ and the transition kernel is governed by the reverse-time SDE
\begin{equation}\label{eq:reverse_SDE_sk}
	d\bar{\xv}_{k,t} = - g^2(t)\,\nabla_{\bar{\xv}_{k,t}} \log p^{(k)}_{\sigma(t)}\paren{\bar{\xv}_{k,t}} \, dt + g(t) \, d\bar{\wv}_{k,t},
\end{equation}
where $ p^{(k)}_{\sigma} \triangleq p_{\sv_k + \sigma\nv}$. Recalling (\ref{eq:Anderson}) and (\ref{eq:marginal_pdf}), one can verify that
\begin{align}
	p_{\bar{\xv}_{k,0} | \bar{\xv}_{k,t}}\cparen{ \hat{\xv}_{k,0} } { \hat{\xv}_{k,t} } &= p_{{\xv}_{k,0} | {\xv}_{k,t}}\cparen{ \hat{\xv}_{k,0} } { \hat{\xv}_{k,t} } \nonumber \\
	&= p_{\sv_k | \sv_k + \sigma(t)\nv} \cparen{ \hat{\xv}_{k,0} } { \hat{\xv}_{k,t} } \label{eq:diff_eq_cond_pdf}.
\end{align}
Then by (\ref{eq:cond_pdf_eq_var}), if there exists $t_v\in[0,T]$ satisfying $\sigma(t_v)=\sigma_v$, the following holds:
\begin{align}\label{eq:diff_equiv_cond_sample}
	& \; p_{\sv_k\vert \yv, \sv_{1:k-1}, \sv_{k+1:K} }\cparen{ \;\cdot\; } { \hat{\yv}, \sv_{1:k-1}^{(i)}, \sv_{k+1:K}^{(i-1)} } \nonumber\\
	= & \; p_{\bar{\xv}_{k,0} \mid \bar{\xv}_{k,t_v}} \cparen{ \;\cdot\; } { \hat{\yv}-\sum\nolimits_{j=1}^{k-1} \sv^{(i)}_{j} - \sum\nolimits_{j=k+1}^K \sv^{(i-1)}_{j} }. \nonumber
\end{align}
Thus by setting $\bar{\xv}_{k,t_v} \gets \hat{\yv}-\sum\nolimits_{j=1}^{k-1} \sv^{(i)}_{j} - \sum\nolimits_{j=k+1}^K \sv^{(i-1)}_{j}$ and simulating (\ref{eq:reverse_SDE_sk}) from $t=t_v$ to $0$, the resultant $\bar{\xv}_{k,0}$ can be regarded as a realization of $\sv^{(i)}_k$ in Algorithm \ref{alg:Gibbs}. Applying this result to Algorithm \ref{alg:Gibbs} yields the proposed \textit{Diffusion-within-Gibbs} (DiG) algorithm; see Algorithm~\ref{alg:DiG}. 

Remarkably, the DiG algorithm only requires pre-trained diffusion models for the priors of individual source signals. Hence, unlike existing methods \cite{scheibler2023,dong2025,karchkhadze2025} that require retraining when the number of sources or the prior distributions change, DiG can flexibly plug pre-trained diffusion models into various signal separation problems.

\begin{algorithm}[t]
	\caption{Diffusion-within-Gibbs (DiG) algorithm}
	\label{alg:DiG}
	\begin{algorithmic}
		\REQUIRE initial value $\paren{\sv^{(0)}_2, \sv^{(0)}_3, \cdots, \sv^{(0)}_K}$, observation $\hat{\yv}$
		\PARAMETER number of iterations $N$, noise level $\sigma_v$
		\PREPARATION train $K$ diffusion models for simulating the reverse-time SDE (\ref{eq:reverse_SDE_sk}) for all $k\in \bracset{1,2,\dots,K}$
		\ENSURE posterior sample $\paren{\sv^{(N)}_1, \sv^{(N)}_2, \cdots, \sv^{(N)}_K}$
		
		\STATE find $t_v\in[0,T]$ satisfying $\sigma(t_v)=\sigma_v$
		\FOR{$i=1$ to $N$}
		\FOR{$k=1$ to $K$}
		\STATE $\bar{\xv}_{k,t_v}\gets \hat{\yv}-\sum_{j=1}^{k-1} \sv^{(i)}_{j} - \sum_{j=k+1}^K \sv^{(i-1)}_{j}$
		\STATE obtain $\bar{\xv}_{k,0}$ by simulating (\ref{eq:reverse_SDE_sk}) from $t=t_v$ to $0$
		\STATE $\sv^{(i)}_k\gets \bar{\xv}_{k,0}$
		\ENDFOR
		\ENDFOR
	\end{algorithmic}
\end{algorithm}

\subsection{Consistency with an Ideal Diffusion Simulator}

In the following, we establish the consistency of DiG, namely that under an ideal diffusion simulator, it produces exact posterior samples.

\begin{theorem}\label{thm:Gibbs_convergence}
	Assume that for all $1\leq k\leq K$, $\sv_k$ has a strictly positive density with respect to the Lebesgue measure, and that the simulation of (\ref{eq:reverse_SDE_sk}) is exact.\footnote{Here, ``exact simulation of (\ref{eq:reverse_SDE_sk})" means that, in the diffusion model used to simulate (\ref{eq:reverse_SDE_sk}), the denoising network perfectly learns the MMSE estimator and the discretization step size in the numerical solver vanishes.} Then the joint distribution of $\sv^{(N)}_{1:K}$ generated by Algorithm \ref{alg:DiG} converges in total variation distance to the true posterior $p_{\sv_{1:K} \mid \yv=\hat{\yv} }$ as $N \to \infty$.
\end{theorem}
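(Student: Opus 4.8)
The plan is to recognize Algorithm~\ref{alg:DiG} as an exact systematic-scan Gibbs sampler for the posterior and then invoke the classical ergodic theory of Gibbs samplers. First I would use the derivation of the preceding two subsections: $\sigma(\cdot)$ increases continuously from $\sigma(0)=0$, so a time $t_v$ with $\sigma(t_v)=\sigma_v$ exists as assumed in Algorithm~\ref{alg:DiG}, and under the hypothesis that (\ref{eq:reverse_SDE_sk}) is simulated exactly, identities (\ref{eq:diff_eq_cond_pdf}) and (\ref{eq:cond_pdf_eq_var}) show that the step ``$\sv^{(i)}_k\gets\bar{\xv}_{k,0}$'' draws $\sv^{(i)}_k$ exactly from $p_{\sv_k\mid\yv,\sv_{1:k-1},\sv_{k+1:K}}(\cdot\mid\hat{\yv},\sv^{(i)}_{1:k-1},\sv^{(i-1)}_{k+1:K})$. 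Hence Algorithm~\ref{alg:DiG} coincides with the systematic-scan Gibbs sampler for the target $\pi\triangleq p_{\sv_{1:K}\mid\yv=\hat{\yv}}$ on $(\R^d)^K$ run from the deterministic initialization (note $\sv_1$ is drawn first in each sweep, so the partial initialization $(\sv^{(0)}_2,\dots,\sv^{(0)}_K)$ fully determines the chain). Writing $P$ for the one-sweep transition kernel and $\hat{x}$ for the initial state, the law of $\sv^{(N)}_{1:K}$ is exactly $P^N(\hat{x},\cdot)$, so it suffices to show $\norm{P^N(\hat{x},\cdot)-\pi}_{\mathrm{TV}}\to 0$ for every $\hat{x}$.

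I would then record the two structural facts that drive the argument. \emph{Invariance}: decomposing $P=\Pi_1\cdots\Pi_K$ with $\Pi_k$ the update of coordinate $k$ from its true full conditional, each $\Pi_k$ leaves $\pi$ invariant, hence $\pi P=\pi$. \emph{Strict positivity}: by (\ref{eq:cond_pdf_exp1}) the $k$-th full conditional has density proportional to $p_{\sv_k}(\cdot)\,e^{-\norm{\rv^{(i)}_k-\cdot}^2_2/(2\sigma_v^2)}$; since $p_{\sv_k}$ is strictly positive and the Gaussian factor lies in $(0,1]$, this density is everywhere positive with a finite, positive normalizing constant. Multiplying the $K$ conditionals shows that $P$ has a transition density $p(x,y)$ with $p(x,y)>0$ for all $x,y\in(\R^d)^K$, and likewise $\pi$ has an everywhere-positive Lebesgue density --- this is exactly the ``positivity condition'' under which the qualitative convergence theory of Gibbs samplers applies.

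The conclusion then follows from standard Markov-chain theory. Everywhere-positivity of $p(x,\cdot)$ makes the chain $\psi$-irreducible with $\psi$ equivalent to Lebesgue measure, hence $\pi$-irreducible (as $\pi$ and Lebesgue measure are mutually absolutely continuous), and aperiodic (any small set is re-entered from itself with positive probability, precluding a cyclic decomposition); the finite invariant measure $\pi$ forces positive recurrence; and since $P(x,\cdot)$ is in fact equivalent to $\pi$ for every $x$, the chain cannot be absorbed in a $\pi$-null set, which together with $\pi$-irreducibility and $\pi$-invariance yields Harris recurrence --- the standard route for Gibbs samplers under the positivity condition (cf.\ Roberts and Smith 1994; Tierney 1994; Robert and Casella 2004, Ch.~10). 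Harris recurrence, aperiodicity and positive recurrence together give $\norm{P^N(\hat{x},\cdot)-\pi}_{\mathrm{TV}}\to 0$ for every starting point $\hat{x}$, which applied to the deterministic initialization of Algorithm~\ref{alg:DiG} is precisely the claimed total-variation convergence.

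The step I expect to be the genuine obstacle is this last one: upgrading convergence from ``$\pi$-almost every starting point'' --- which $\pi$-irreducibility and aperiodicity already deliver --- to ``every starting point'', since Algorithm~\ref{alg:DiG} is initialized at a single deterministic point, a $\pi$-null set. That upgrade is exactly what Harris recurrence buys, and it is the one place where the assumption that each $p_{\sv_k}$ is \emph{strictly} positive is indispensable, as it rules out an absorbing $\pi$-null region in which the chain could otherwise get stuck; everything else is bookkeeping on top of the Gibbs/diffusion equivalence already established.
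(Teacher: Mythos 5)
Your proposal is correct and follows essentially the same route as the paper: identify the exactly-simulated DiG iteration with a systematic-scan Gibbs sweep targeting $p_{\sv_{1:K}\mid\yv=\hat{\yv}}$, use strict positivity of the priors to get irreducibility and aperiodicity, and conclude via the standard ergodic theory (the paper cites Tierney 1994, Thm.~1 and Cor.~1). The Harris-recurrence step you flag as the genuine obstacle --- upgrading from $\pi$-a.e.\ to every starting point, which matters for the deterministic initialization --- is precisely what the paper's appeal to Tierney's Corollary~1 supplies, and your observation that one only needs invariance of the composed kernel (rather than the paper's claimed reversibility, which fails for systematic scan) is a minor but valid refinement.
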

\begin{proof}
	When (\ref{eq:reverse_SDE_sk}) is simulated exactly, the $i$th iteration of Algorithm \ref{alg:DiG} is equivalent to that of Algorithm \ref{alg:Gibbs}. Hence, the Markov transition kernel from $\sv^{(i-1)}_{1:K}$ to $\sv^{(i)}_{1:K}$, denoted by $M$, is reversible with respect to $p_{\sv_{1:K} \mid \yv=\hat{\yv} }$. Since for each $1\leq k\leq K$, $p_{\sv_k}$ is strictly positive everywhere, one can verify that $M$ is irreducible and aperiodic. Thus we can derive the result from \cite[Thm.~1 and Cor.~1]{tierney1994}.
\end{proof}

Recently, several other diffusion-model-based posterior sampling methods have also been shown to achieve consistency with ideal diffusion simulators for more general inverse problems \cite{daras2024}. However, such guarantees are typically asymptotic, requiring either the number of samples maintained in each iteration to approach infinity \cite{dou2024} or certain annealing parameter to vanish arbitrarily slowly \cite{xu2024}, thus providing weaker guarantees than the proposed DiG algorithm.

\section{Experiments}

\begin{table}
	\centering
	\setlength{\tabcolsep}{3pt} 
	\caption{Mean squared error of recovered heartbeat signals.}
	\resizebox{\linewidth}{!}{ 
		\begin{tabular}{c|ccccc}
			\hline
			(SIR, SNR) (dB) & EMD & VMD & MSDM & DPnP & Ours \\
			\hline
			(-20.1, 13.2) &  9.17 & 0.33 & 0.20 & 0.32 & \textbf{0.19} \\
			(-26.1, 13.2) & 18.34 & 0.62 & 0.37 & 0.44 & \textbf{0.26} \\
			(-40.1, 13.2) & 90.24 & 27.06 & 4.06 & 0.98 & \textbf{0.57} \\
			(-20.1, -0.8) &  9.21 & 0.79 & 0.85 & 0.41 & \textbf{0.28} \\
			(-26.1, -0.8) & 18.36 & 0.84 & 0.92 & 0.50 & \textbf{0.31} \\
			(-40.1, -0.8) & 91.62 & 20.86 & 4.23 & 0.93 & \textbf{0.61} \\
			(-20.1, -6.8) &  9.30 & 1.44 & 1.90 & 0.59 & \textbf{0.37} \\
			(-26.1, -6.8) & 18.41 & 1.42 & 1.95 & 0.65 & \textbf{0.39} \\
			(-40.1, -6.8) & 91.64 & 5.23 & 4.80 & 0.93 & \textbf{0.68} \\
			\hline
		\end{tabular}
	}
	\label{tab1}
\end{table}

We evaluate the DiG algorithm on the task of extracting heartbeat signals from mixtures with synthetic motion artifacts, motivated by mmFMCW radar applications for heartbeat monitoring under large-scale body motion \cite{han2023cough}. Although this task involves only two sources, the large amplitude disparity between the desired heartbeat and the motion interference presents a significant challenge.

\textbf{Datasets}: Heartbeat training signals are drawn from the impedance dataset \cite{schellenberger2020dataset}, bandpass-filtered, and segmented into 10-s clips from 25 subjects, yielding 50,000 samples. Motion signals are generated from 10-s velocity profiles with randomized piecewise-constant amplitudes and sigmoidal transitions. Both components are normalized before training. For evaluation, measurements are formed by summing new heartbeat segments (signal of interest) from 5 held-out subjects with motion segments (interference) and additive Gaussian noise (observational noise), producing multiple test sets that cover different combinations of signal-to-interference ratio (SIR) and signal-to-noise (SNR) ratio.

\textbf{Diffusion Models}: We train two diffusion models respectively for heartbeat and motion signal generation. Following \cite{song2021}, we set the noise injection rate $g(t)\triangleq \alpha^t$ in (\ref{eq:reverse_SDE_sk}) with $\alpha=15$. Both diffusion models employ denoising networks with a WaveNet-inspired \cite{van2016wavenet} design.

\textbf{Algorithms for comparison}: We compare the proposed DiG algorithm with four methods: EMD \cite{huang1998}, VMD \cite{dragomiretskiy2014}, MSDM \cite{mariani2024} and DPnP \cite{xu2024}. The first two are classical model-driven decomposition methods, while the latter two are state-of-the-art posterior sampling methods using PnP diffusion priors. For EMD and VMD, which only decompose the observation into multiple periodic modes, we select a subset of modes whose sum best approximates the heartbeat signal. Specifically, we choose the combination of modes that minimizes the $\ell_2$-distance to the ground truth.\footnote{This oracle selection assumes access to the true signal; in practical scenarios without ground truth, performance is expected to degrade.} In all diffusion-based sampling methods, we draw 25 posterior samples for each observation and compute their average as an estimate of the heartbeat signal.

\textbf{Results}: The evaluation of different methods was performed on a test set of 200 samples under various combinations of SIR and SNR. Table~\ref{tab1} reports the mean squared error (MSE) between the ground truth (GT) and the recovered heartbeat signals. Two representative examples of recovered signals, corresponding to (SIR, SNR) = (-40.1 dB, 13.2 dB) and (-26.1 dB, -0.8 dB), are illustrated in Fig.~\ref{fig1}. From Table~\ref{tab1} and Fig.~\ref{fig1}, the proposed DiG algorithm consistently outperforms all comparison methods across the full range of SIR and SNR settings, achieving more accurate heartbeat recovery even under strong motion interference or low SNR.
\begin{figure}
	\centering
	\includegraphics[width=.98\linewidth]{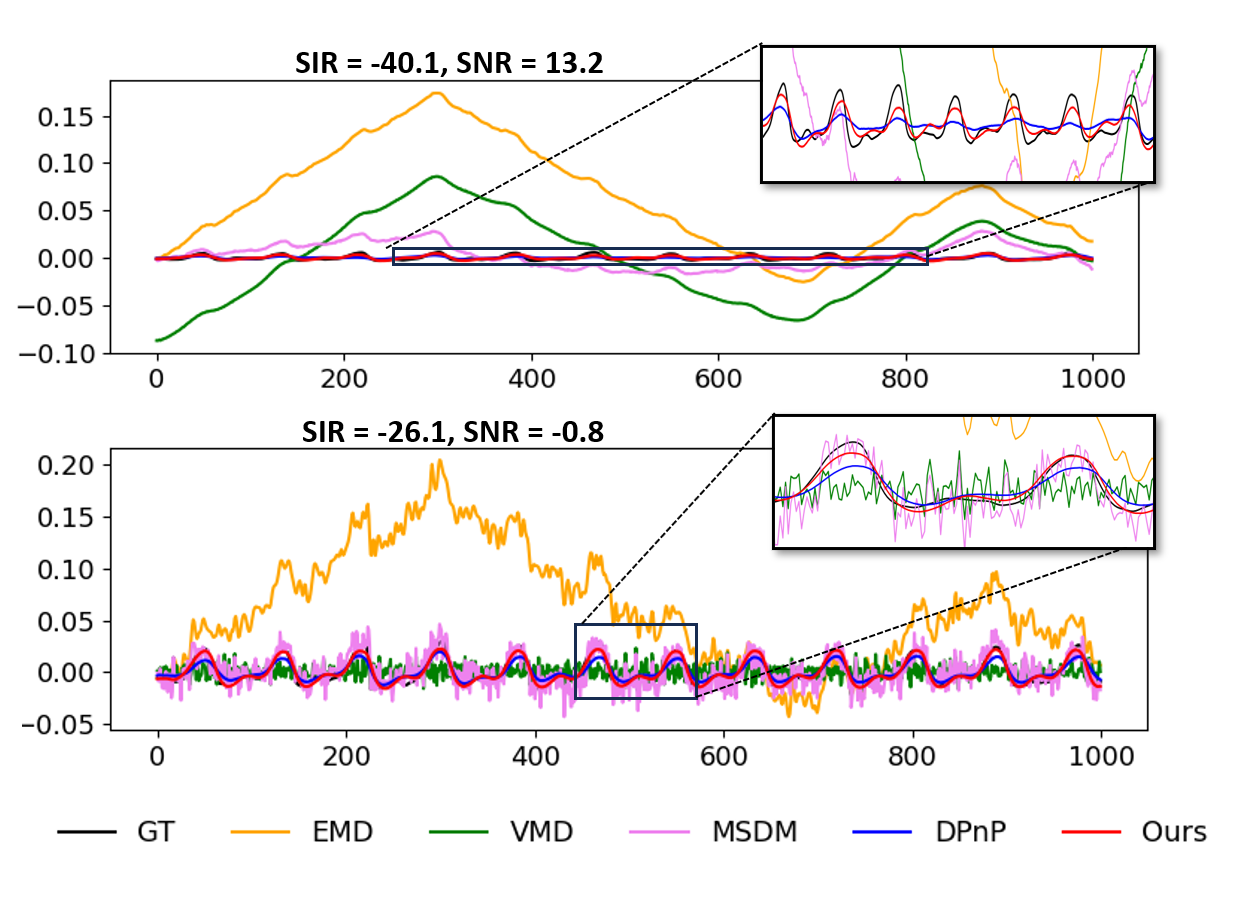} 
	\vspace{-7mm}
	
	\caption{Representative recovered heartbeats with (SIR, SNR) = (-40.1 dB, 13.2 dB) and (-26.1 dB, -0.8 dB), respectively. }
	\label{fig1}
\end{figure}
\vspace{-2mm}
\section{Conclusion}
We presented a posterior sampling algorithm for Bayesian signal separation, which combines Gibbs sampling with plug-and-play diffusion priors. 
Our method allows flexible incorporation of diverse source distributions without retraining and is theoretically consistent, converging to the true posterior using ideal diffusion models. 
Experiments on heartbeat extraction have demonstrated superior performance of the proposed method over existing approaches. 

\vfill\pagebreak

\bibliographystyle{IEEEtran}
\bibliography{refs}

\end{document}